\title{Diverse Pairs of Matchings}
\author[1]{Fedor~V.~Fomin}
\author[1]{Petr~A.~Golovach}
\author[1]{Lars~Jaffke}
\author[2,3]{Geevarghese~Philip}
\author[4,5]{Danil~Sagunov}
\affil[1]{University of Bergen, Bergen, Norway}
\affil[ ]{\texttt{\{fedor.fomin,petr.golovach,lars.jaffke\}@uib.no}}
\affil[2]{Chennai Mathematical Institute, Chennai, India}
\affil[3]{UMI ReLaX}
\affil[ ]{\texttt{gphilip@cmi.ac.in}}
\affil[4]{St.\ Petersburg Department of V.A.\ Steklov Institute of Mathematics, St.\ Petersburg, Russia}
\affil[5]{JetBrains Research, St.\ Petersburg, Russia}
\affil[ ]{\texttt{danilka.pro@gmail.com}}
\newenvironment{nestedclaimthm}
	{\begin{nestedclaim}}
	{\end{nestedclaim}}
\newenvironment{nestedobservationthm}
	{\begin{nestedobservation}}
	{\end{nestedobservation}}
\begin{document}

\maketitle

\begin{abstract}
We initiate the study of the \textsc{Diverse Pair of (Maximum/ Perfect) Matchings} problems 
which given a graph $G$ and an integer $k$, ask whether $G$ has two (maximum/perfect) matchings
whose symmetric difference is at least $k$.
\textsc{Diverse Pair of Matchings} 
(asking for two not necessarily maximum or perfect matchings) 
is \NP-complete on general graphs 
if $k$ is part of the input, and we consider two restricted variants.
First, we show that on bipartite graphs, the problem is polynomial-time solvable,
and second we show that \textsc{Diverse Pair of Maximum Matchings}
is \FPT parameterized by $k$.
We round off the work by showing that \textsc{Diverse Pair of Matchings} 
has a kernel on $\calO(k^2)$ vertices.
\end{abstract}

%!TEX root=diverse-matchings.tex

\section{Introduction} 
Matching is one of the most fundamental notions in graph theory whose study can be traced back to the classical theorems of  K\H{o}nig~\cite{Konig1916} and Hall~\cite{Hall1935}. 
The first chapter of the book of Lov{\'a}sz and Plummer
\cite{LP09} devoted to matching contains a nice historical overview on the development of the matching problem. The problem of finding a maximum size or a perfect matching are the classical algorithmic problems; an incomplete list of references covering the history of algorithmic improvements on these problems is \cite{edmonds1965paths,HopcroftK73,Karzanov74,Lovasz79,MuchaS04,RabinV89,vazirani2014proof,Madry13}, see also the book of Schrijver  \cite{Schrijver03} for a historical overview of matching algorithms.

In this paper we initiate the algorithmic study of the \emph{diverse} matching
problem.  In this problem, we are to find a pair of matchings which are
different from each other as much as possible.  More formally, we want the size
of their symmetric difference to be large. Recall that the 
\emph{symmetric difference} of two sets \(X, Y\) is defined as
\[
  X \symdiff Y = (X \setminus Y) \cup (Y \setminus X).
\]

We study the following problem.
\fancyproblemdef
	{Diverse Pair of (Maximum/Perfect) Matchings}
	{Graph $G$, integer $k$}
	{Does $G$ contain two (maximum/perfect) matchings $M_1$, $M_2$ such that $\card{M_1 \symdiff M_2} \ge k$?}

    Diversity-enhancing is one of the key goals in developing professional
    social matching systems \cite{DBLP:journals/cacm/0002HK20}.  For example,
    consider the problem of assigning agents to perform various tasks (say, bus
    drivers to bus routes or cleaners to different locations).  To avoid
    monotony, which is one of the declared enemies of happiness at work, the
    practice is to reassign agents to new tasks. 
%    This corresponds to the
%    (different) problem of finding a matching which is ``far away''from a
%    \emph{given} maximum matching (namely, the original assignment of agents to
%    tasks). As part of our study of \divmatchmax we show---see
%    \autoref{claim:base}---that this latter problem can be solved in polynomial
%    time. 
    In this case, we would be very
    much interested in designing a schedule with diverse assignments. 
    % an assignment diverse from the existing one.
    To give another illustration, assume that a teacher should give a series of
    assignments to students that are expected to work in pairs.  From one side,
    the teacher wishes to follow the preferences of the students given by a
    graph, but from the other side, it is preferable to facilitate collaboration
    between different students. This leads to the problem of finding diverse
    perfect matchings in the preference graph.

    We now briefly motivate why finding a diverse set of maximum/perfect
    matchings in a graph would be of interest.  From a graph-theoretic point of
    view, in the simplest model, one maximum/perfect matching is as good as the
    other.  But in a practical setting this is rarely the case since there is a
    large amount of \emph{side information} that determines how an assignment
    (for instance agents to tasks) is received.
    Some side information is modeled by maximum weight matchings, or via notions
    from social choice theory such as stable or envy free
    matchings~\cite{BrandtEtAl2016}.  Nevertheless, this approach has its
    natural limitations; some side information may complicate the model,
    rendering it intractable, while some side information may even be impossible
    to include in a model.

    For instance, if we allow agents to have incomplete preference lists or
    ties, then the corresponding maximum stable matching problem is \NP-hard,
    even in severely restricted cases~\cite{OMalleyThesis}.
    Other side information may be a priori unknown, and only once presented with
    a number of alternatives, we may be able to decide which assignment is the
    most desirable.
    In that case it is key that the presented alternatives are diverse,
    otherwise the insight we gain is comparable to that of having a single fixed
    assignment and is therefore negligible.
    Similar motivations for finding diverse solution sets in combinatorial
    problems can be found in~\cite{BasteEtAl2019,BasteEtAl2019b}.

\subparagraph*{Our results and methods.} 
While a perfect or a maximum matching in a graph can be found in polynomial time, 
this is not true anymore for the diverse variant of the problem, even in graphs of maximum degree three.
Matching problems are often considered on bipartite graphs, and we show that
\textsc{Diverse Pair of Maximum Matchings} remains polynomial-time solvable in this case.

The intractability of the problem in the general case also suggests to look at it from the 
perspective of parameterized complexity~\cite{DowneyFellowsBook,CyganEtAl15} 
and kernelization~\cite{kernelizationbook19}. 
We show that the problem is \FPT parameterized by $k$, 
by giving a randomized $4^k\cdot n^{\calO(1)}$ time algorithm,
and we give a derandomized version of this algorithm that runs in time $4^kk^{\calO(\log k)}\cdot n^{\calO(1)}$.
Finally, we show that the problem asking for a diverse pair of (not necessarily maximum) matchings
admits a kernel on $\calO(k^2)$ vertices.

The randomized algorithm for \divmatchmax is obtained via a combination of color-coding~\cite{AlonYZ95}
and the polynomial-solvability of finding a \emph{minimum cost} maximum matching in a graph~\cite{GabowTarjan1991}.
We derandomize this algorithm via universal sets~\cite{NSS94}.
The kernelization algorithm for \divmatch first finds a maximal matching $M$ in the graph.
If $M$ is large enough, then we can conclude that we are dealing with a \yes-instance by splitting
$M$ into two matchings. 
Otherwise, the endpoints of $M$ form a vertex cover of the input graph
which allows us to shrink the graph without changing the answer to the problem.

\subparagraph*{Related work.}
A well-studied generalization of matchings in graphs is that of a \emph{$b$-matching},
where $b$ is an integer; see for instance~\cite{LP09}.
Given a graph $G$ and an integer $b$,
a \emph{$b$-matching} is an assignment of an integer $\mu(e)$ to each edge $e$ of $G$,
such that for each vertex $v$, the sum over all its incident edges $e_v$ of $\mu(e_v)$ is bounded by $b$.
The size of a $b$-matching is the sum over all edges $e$ in $G$ of $\mu(e)$.
The $1$-matchings of a graph precisely correspond its matchings
(via the edges $e$ with $\mu(e) = 1$).
However, a $2$-matching is not always the union of two matchings: 
take for instance a triangle.
Then, assigning a value of $1$ to all its edges gives a $2$-matching; 
while any matching can have at most one edge from a triangle.
Therefore, finding diverse pairs of matchings is not the same as finding $2$-matchings.

Finding $q$ pairwise \emph{disjoint} matchings of large total size corresponds to finding
large subgraphs that can be $q$-edge colored, each matching constitutes a color class.
The \textsc{Maximum $q$-Edge Colorable Subgraph} problem asks for the largest edge-subgraph 
that can be properly colored with $q$ colors.
This problem is known to be hard to approximate~\cite{FeigeEtAl2002}.

Let $G$ be a graph with edge set $E$ and maximum degree $\Delta$.
Any proper edge coloring requires at least $\Delta$ colors. 
On the other hand,
Vizing's Theorem~\cite{VizingTheorem} asserts that every graph can be properly edge-colored with $\Delta + 1$ colors.
A consequence of this result is that (any) graph $G$ contains a $\Delta$-colorable subgraph with 
at least $\frac{\Delta}{\Delta+1}\card{E}$ edges;
which is tight when $\Delta$ is even as witnessed by the complete graph $K_{\Delta+1}$.
This motivated research in improving the lower bound when $\Delta$ is odd, 
or when $\Delta$ is even and $K_{\Delta+1}$ is excluded.
Kami\'{n}ski and Kowalik~\cite{KK2014} gave several improved lower bounds for the cases when $\Delta \le 7$.

The difference with \textsc{Maximum $2$-Edge Colorable Subgraph} is that in \divmatchmax,
we require matchings (or: color classes) to be of maximum size, while in the former problem,
we only want to maximize the total number of edges in the two color classes.

A recent manuscript due to Fellows~\cite{FellowsDiverse} 
initiated the study of finding diverse sets of solutions to \NP-hard combinatorial problems 
from the viewpoint of parameterized complexity~\cite{BasteEtAl2019,BasteEtAl2019b}.
Concretely, Baste et al.~\cite{BasteEtAl2019} showed that a large class of vertex subset problems
that are \FPT parameterized by treewidth have \FPT algorithms in their diverse variant, 
parameterized by treewidth plus the number of requested solutions.
Moreover, Baste et al.~\cite{BasteEtAl2019b} showed analogous results for hitting set problems 
parameterized by solution size plus number of requested solutions.
Our work contrasts this in that
the classical variant of the problem we consider is polynomial-time solvable,
while its diverse variant becomes \NP-hard, even when asking for only two solutions.

Very recently, Hanaka et al.~\cite{HanakaEtAl2020} gave efficient algorithms
for finding diverse sets of solutions to several other combinatorial problems.
This includes an \FPT-algorithm for finding diverse sets of matchings in a graph.
However, their result is different from ours.
We give an \FPT-algorithm for finding a diverse pair of \emph{maximum} or \emph{perfect} matchings,
and our parameter is the size of the symmetric difference between the matchings, in other words, the diversity measure.
In~\cite{HanakaEtAl2020}, the parameter is the size of the matchings plus the number of requested solutions,
and the matchings do not need to be of maximum cardinality.
Note that in this setting, the maximum possible diversity is bounded in terms of the parameter as well.
In the case that we drop the maximum cardinality requirement on the matchings, 
we even obtained a \emph{polynomial kernel} for finding a diverse pair of matchings.
By the same arguments given in the proof of Theorem~\ref{thm:kernel},
we can derive that the problem of finding a diverse set of $r$ matchings of size $k$ parameterized by $k + r$ 
considered in~\cite{HanakaEtAl2020}
is not only \FPT but has a polynomial kernel.

%
%
% The algorithm finding a minimum-size vertex cover and maximum-size matching in bipartite graphs in time $\calO (\sqrt{n}m)$ is due to 
%  \cite{HopcroftK73} and  \cite{Karzanov74}. We refer to Chapter 16 of Schrijver's book \cite{Schrijver03} for an overview of maximum-size matching algorithms on bipartite graphs. 

%The existence of a perfect matching in a graph can be tested by calculating the determinant of the Tutte matrix~\cite{Tutte}.
%Lov\'asz~\cite{Lovasz79} showed that the size of a maximum matching can be found by computing the rank, while Mucha and Sankowski~\cite{MuchaS04,MuchaS06} gave a randomized algorithm for extracting a maximum matching: 
%in $\calO (n^\omega)$ time for general graphs and $\calO (n^{\omega/2})$ for planar graphs. The results on general graphs were later simplified by Harvey~\cite{Harvey09}.
%Before that, Edmonds~\cite{edmonds1965paths} gave the first polynomial time algorithm, then bested by combinatorial algorithms of
%Micali and Vazirani~\cite{RabinV89,vazirani2014proof}, Blum~\cite{Blum90}, and Gabow and Tarjan~\cite{GabowT91}, each running in $\calO (m\sqrt{n})$ time.
% ~\cite{Madry13} gave an $\calO (m^{10/7})$ algorithm for the unweighted bipartite case, then generalized to the weighted bipartite case by Cohen et al.~\cite{CohenMSV16}.
%
%
%
%Matching is great. It is a classical  and one of
%
% the oldest graph problems. As well as many applications.

%!TEX root=diverse-matchings.tex

\section{Preliminaries}
We assume the reader to be familiar with basic notions in graph theory and 
parameterized complexity and refer to~\cite{diestel} 
and~\cite{DowneyFellowsBook,CyganEtAl15,kernelizationbook19}, respectively,
for the necessary background.

%For a positive integer \(n\) we use \([n]\) to denote the set\(\{1, 2, \dotsc, n\}\). 
All graphs considered in this work are finite,
undirected, simple, and without self-loops.
For a graph $G$ we denote by $V(G)$ its set of \emph{vertices} and by $E(G)$ its set of \emph{edges}.
For an edge $uv \in E(G)$, we call $u$ and $v$ its \emph{endpoints}.
For a vertex $v$ of a graph $G$, $N_G(v) \defeq \{w \in V(G) \mid vw \in E(G)\}$ 
is the set of \emph{neighbors} of $v$ in $G$,
and the \emph{degree} of $v$ is $\deg_G(v) \defeq \card{N_G(v)}$.

The \emph{subgraph induced by $X$}, denoted by $G[X]$, is the graph $(X, \{uv\in E(G) \mid u,v\in X\})$.
For a set of edges $F \subseteq E(G)$, we let $G - F \defeq (V(G), E(G) \setminus F)$.

A graph $G$ is called \emph{empty} if $E(G) = \emptyset$. 
A set of vertices $S \subseteq V(G)$ is an \emph{independent set} if $G[S]$ is empty.
A set $S \subseteq V(G)$ is a \emph{vertex cover} if $V(G) \setminus S$ is an independent set.
A graph $G$ is \emph{bipartite} if its vertex set can be 
partitioned into two nonempty independent sets.

\subparagraph*{NP-Completeness.}
We briefly argue the \NP-completeness of \divmatchgen on $3$-regular graphs which was observed in~\cite{Suo10}.
Membership in \NP is clear.
To show \NP-hardness, we reduce from \textsc{$3$-Edge Coloring} on $3$-regular graphs
which is known to be \NP-complete~\cite{Holyer1981}.
Let $G$ be a $3$-regular graph on $n$ vertices (note that this implies that $n$ is even), 
and consider $(G, n)$ as an instance of \divmatch.
Suppose that $G$ has a proper $3$-edge coloring. 
Since $G$ is $3$-regular, all three colors appear on an incident edge of each vertex.
Therefore, a color class is a perfect matching of $G$, and we can take two color classes
as our solution to $(G, n)$.
Conversely, a solution $(M_1, M_2)$ to $(G, n)$ forms two disjoint matchings of size $n/2$ each.
This implies that both $M_1$ and $M_2$ are perfect, and therefore maximum matchings.
Since each vertex in $G$ has degree three, this means that $M_3 \defeq E(G) \setminus (M_1 \cup M_2)$
also forms a perfect matching in $G$, and therefore $(M_1, M_2, M_3)$ is a proper $3$-edge coloring of $G$.
\begin{observation}[\cite{Suo10}]\label{obs:hardness}
	\textsc{Diverse Pair of (Maximum/Perfect) Matchings} is \NP-complete on $3$-regular graphs.
\end{observation}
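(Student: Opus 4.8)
The plan is to establish membership in \NP\ by the obvious certificate, and \NP-hardness by a polynomial-time reduction from \textsc{$3$-Edge Coloring} on $3$-regular graphs, which is \NP-complete by Holyer~\cite{Holyer1981}. For membership, a pair $(M_1, M_2)$ is a polynomial-size certificate: we can check in polynomial time that each $M_i$ is a matching, that it is maximum (by comparing its size with the size of a maximum matching of $G$, which is computable in polynomial time) or perfect, and that $\card{M_1 \symdiff M_2} \ge k$.

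For the hardness direction, given a $3$-regular graph $G$ on $n$ vertices — so $n$ is even and $\card{E(G)} = 3n/2$ — I would output the instance $(G, n)$ and argue equivalence. If $G$ admits a proper $3$-edge coloring with color classes $C_1, C_2, C_3$, then since every vertex has degree $3$ all three colors occur at each vertex, so each $C_i$ is a perfect matching of $G$; taking $M_1 = C_1$, $M_2 = C_2$ yields two disjoint perfect matchings with $\card{M_1 \symdiff M_2} = \card{M_1} + \card{M_2} = n$, witnessing a \yes-instance for all three variants (perfect matchings are maximum). Conversely, if $(M_1, M_2)$ satisfies $\card{M_1 \symdiff M_2} \ge n$, then since any matching of $G$ has at most $n/2$ edges we get $n \le \card{M_1 \symdiff M_2} \le \card{M_1} + \card{M_2} \le n$, forcing $M_1$ and $M_2$ to be perfect matchings that are moreover edge-disjoint. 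Then $M_3 \defeq E(G) \setminus (M_1 \cup M_2)$ has $3n/2 - n = n/2$ edges, and since every vertex of $G$ has degree $3$ and is covered exactly once by each of $M_1, M_2$, it is covered exactly once by $M_3$ as well; hence $M_3$ is a perfect matching and $(M_1, M_2, M_3)$ is a proper $3$-edge coloring of $G$.

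The reduction is plainly polynomial-time computable, and the argument above applies simultaneously to \divmatch, \divmatchmax, and the perfect-matching variant. The only point that needs a little care — and the closest thing to an obstacle, though it is essentially immediate — is the counting step showing that symmetric difference $n$ is attainable only by two disjoint perfect matchings: this uses nothing more than $\card{E(G)} = 3n/2$ and the fact that a matching covers each vertex at most once, and it is exactly what makes the threshold $k = n$ insensitive to the ``(maximum/perfect)'' qualifier. In particular, if $G$ has no perfect matching then both $(G,n)$ is a \no-instance (two maximum matchings of size $m < n/2$ have symmetric difference at most $2m < n$) and $G$ is not $3$-edge-colorable, so the reduction remains correct in that case as well.
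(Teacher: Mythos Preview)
Your proof is correct and follows essentially the same approach as the paper: a reduction from \textsc{$3$-Edge Coloring} on $3$-regular graphs via the instance $(G,n)$, with the same counting argument showing that symmetric difference at least $n$ forces two disjoint perfect matchings whose complement is again a perfect matching. Your write-up is in fact a bit more careful than the paper's, spelling out the edge count $3n/2$ and the no-perfect-matching case explicitly.
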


\section{A Polynomial-Time Algorithm for Bipartite Graphs}
In this section we show that \divmatchmax is solvable in polynomial time on bipartite graphs
via a reduction to the \textsc{$2$-Factor} problem.
\begin{theorem}
	\divmatchmax is polynomial-time solvable on bipartite graphs.
\end{theorem}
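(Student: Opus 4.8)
The statement claims that \divmatchmax is polynomial-time solvable on bipartite graphs, and the excerpt already announces the route: reduce to \textsc{$2$-Factor}. Let me think about what this reduction should look like.

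We want two maximum matchings $M_1, M_2$ in a bipartite graph $G$ with $|M_1 \symdiff M_2| \ge k$. The key structural fact about the symmetric difference of two matchings is that $M_1 \symdiff M_2$ is a disjoint union of paths and even cycles in which edges alternate between $M_1$ and $M_2$. When $M_1$ and $M_2$ are both *maximum*, no augmenting paths exist relative to either, so actually the components of $M_1 \symdiff M_2$ that are paths must have equal numbers of $M_1$- and $M_2$-edges (an alternating path with unequal counts would be augmenting for one of them). So I want to think of the "difference structure" $D = M_1 \symdiff M_2$ as a subgraph in which every vertex has degree $0$, $1$, or $2$ — a disjoint union of paths and cycles — that is edge-$2$-colorable so that the two color classes can each be completed to a maximum matching of $G$.

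So here's the approach I'd take. First, observe that $|M_1 \symdiff M_2| \ge k$ iff there is a subgraph $H \subseteq G$ with $|E(H)| \ge k$ such that (a) every component of $H$ is a path or an even cycle, and (b) the two parity classes of $H$ (the natural $2$-edge-coloring of a union of paths and even cycles) both extend to maximum matchings of $G$. The second condition is the delicate one. Since we're dealing with maximum matchings, I'd use the following: a matching $M'$ of $G$ can be extended to a maximum matching iff it avoids no... hmm, more precisely, I'd use that in bipartite graphs, by the König–Egerváry structure (or the Dulmage–Mendelsohn decomposition), the set of vertices that are saturated by *every* maximum matching, and the set saturated by *some* maximum matching, are efficiently computable, and a partial matching extends to a maximum matching iff it saturates a subset of the "always-saturated" vertices and ... — actually the cleanest formulation: $M'$ extends to a maximum matching iff $M'$ contains no $M'$-augmenting path in $G$ and $G - V(M')$ has a maximum matching of the right size; equivalently $|M'| + \nu(G - V(M')) = \nu(G)$. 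I would look for a cleaner invariant.

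**The reduction to \textsc{$2$-Factor}.** Here is how I'd make it concrete and how \textsc{$2$-Factor} enters. Build an auxiliary graph $G'$ as follows: take two disjoint copies $G^{(1)}, G^{(2)}$ of $G$; for each vertex $v$ of $G$, add gadgetry linking $v^{(1)}$ and $v^{(2)}$ so that in any $2$-factor the edges selected in copy $i$ at $v^{(i)}$ force $v$ to have degree exactly $1$ in each copy — i.e., we want each copy to contribute a perfect matching on the vertices it touches, encoding $M_1$ and $M_2$ — while an objective (or a cardinality constraint realized by padding vertices) counts $|M_1 \symdiff M_2|$. Since \textsc{$2$-Factor} is polynomial-time solvable (it's a special case of the degree-constrained subgraph / general factor problem solvable via Tutte's reduction to matching, or directly), and one can handle the "maximum matching" requirement by first computing $\nu(G)$ and then forcing each copy to have exactly $\nu(G)$ edges via additional gadget vertices of degree $2$, the whole thing runs in polynomial time. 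The diversity requirement "$\ge k$" is handled by iterating over the target symmetric-difference size, or by a weighted variant.

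**Main obstacle.** The hard part, as I see it, is the bipartiteness-specific step that makes this go through — ensuring the two color classes of the difference subgraph $H$ are *simultaneously* extendable to maximum matchings, and that this simultaneous-extendability is exactly what the $2$-factor encodes with no spurious solutions. On general graphs this is presumably where it breaks (consistent with Observation~\ref{obs:hardness}); on bipartite graphs I expect to exploit that $G^{(1)} \cup G^{(2)}$ with the right gadgets is again bipartite, or that the relevant factor problem becomes tractable, and that odd alternating paths cannot sabotage maximality because in bipartite graphs there are no odd cycles to worry about when reasoning about alternating structures. I would also need to double-check the boundary cases $k \le 0$ and the situation where $G$ has a unique maximum matching (then the answer is "no" unless $k \le 0$), and confirm the gadget correctly forbids a vertex from being matched in the same way in both copies only when we intend it. I'd expect the write-up to spend most of its effort verifying the correctness of the gadget in both directions.
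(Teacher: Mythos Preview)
You correctly identify that the route is a reduction to maximum-weight \textsc{$2$-Factor}, but your construction is never actually specified: you propose two disjoint copies of $G$ joined by unstated ``gadgetry,'' you hope to force each copy to carry exactly $\nu(G)$ edges ``via additional gadget vertices of degree $2$'' without saying how, and you openly list the main obstacle---simultaneous extendability of the two colour classes to maximum matchings---as unresolved. As written this is a plan, not a proof; the missing piece is precisely a concrete gadget together with a correctness argument in both directions.

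The paper's construction is both simpler and rather different from your two-copy idea. It works on a \emph{single} copy of $V(G)$: pad the smaller side so both parts have size $n$, then form the complete bipartite multigraph on these parts with every opposite-side pair joined by two parallel edges. A pair $uv \in E(G)$ gets parallel copies of weights $1$ and $0$; a non-edge of $G$ gets two parallel copies each of weight $-n$. The decisive, specifically bipartite, observation is that any $2$-factor of this multigraph is a disjoint union of \emph{even} cycles and therefore $2$-edge-colours into two perfect matchings $F'_1, F'_2$ of the complete bipartite graph; restricting each $F'_i$ to the edges present in $G$ yields matchings $M'_1, M'_2$ of $G$. The $-n$ penalties force a maximum-weight $2$-factor to minimise the number of fake edges, which forces $|M'_1| = |M'_2| = \mu(G)$; among such factors, the $\{0,1\}$ weights on real edges make the objective equal $|M'_1 \cup M'_2|$ up to a fixed additive term. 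Since $|M_1 \symdiff M_2| = 2(|M_1 \cup M_2| - \mu(G))$ for any two maximum matchings, the maximum $2$-factor weight directly determines the maximum attainable symmetric difference. Your ``simultaneous extendability'' worry thus evaporates entirely: the even-cycle decomposition \emph{automatically} hands you two maximum matchings, and this is exactly where bipartiteness is used and where the reduction would break on general graphs (a $2$-factor containing an odd cycle does not split into two matchings).
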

\begin{proof}
	Let $(G,k)$ be a given instance of \divmatchmax, where $G$ is bipartite.
	We show how to reduce this instance to an equivalent instance of the problem of finding maximum-weight $2$-factor of a larger graph $G'$.
	A \emph{$2$-factor} of $G'$ is a subgraph of $G'$ in which the degree of each vertex is equal to $2$.
	Equivalently, $2$-factor of $G'$ is a vertex-disjoint cycle cover of $G'$.
	The problem of finding a maximum-weight factor of a graph is well-known to be solvable in polynomial time using the Tutte's reduction to the problem of finding a maximum-weight perfect matching \cite{LP09,tutte1954short}.
	Our graph $G'$ is an edge-weighted graph with parallel edges.
	We note that the algorithm of finding a maximum-weight factor works fine with such graphs.
	
	We also assume that the two parts of $G$ are of equal size.
	If that is not true, introduce isolated vertices to the smaller part of $G$.
	This does not change the matching structure in $G$, so the obtained instance is equivalent to the initial one.
	Denote the number of vertices in each part of $G$ by $n$, so $|V(G)|=2n$.
	
	We now show how to construct $G'$ given $G$.
	The graph $G'$ is defined on the same vertex set as $G$ is, i.e.\ $V(G')=V(G)$.
	For each edge $uv$ of $G$, $G'$ has two parallel edges between $u$ and $v$.
	One of these edges is assigned weight $1$, and the other is assigned weight $0$.
	In other words, edges of $G$ are doubled in $G'$.
	Additionally, for each pair of vertices $u,v$ from distinct parts of $G$ that are not adjacent in $G$, $G'$ has two parallel edges of weight $-n$ between $u$ and $v$.
	Thus, $G'$ is a complete bipartite graph with doubled edges, and weights of these edges depend on what edges are present in $G$.
	This finishes the construction of $G'$.
	
	\begin{nestedclaimthm}\label{cla:two_factor_claim}
		Let $M_1$ and $M_2$ be a pair of maximum matchings in $G$ that maximize the value of $|M_1 \cup M_2|$.
		Then the maximum weight of a $2$-factor of $G'$ equals $|M_1 \cup M_2|-2n\cdot(n- |M_1|)$.
	\end{nestedclaimthm}
	\begin{claimproof}
		We first show that $G'$ has a $2$-factor of weight at least $|M_1\cup M_2|-2n \cdot (n-|M_1|)$.
		Denote this $2$-factor by $F$.
		It is constructed as follows.
		For each edge of $M_1$ take the corresponding edge of weight $1$ in $G'$ into $F$.
		Then, for each edge in $M_2\setminus M_1$ take the corresponding edge of weight $1$ into $F$. 
		For each edge in $M_1 \cap M_2$, take the corresponding edge of weight $0$ in $G'$ into $F$.
		Clearly, $F$ is now of weight $|M_1\cup M_2|$, but it is not yet a $2$-factor of $G'$, unless $M_1$ and $M_2$ are perfect matchings.
		
		There are $n-|M_1|$ vertices in each part of $G$ that are not saturated by $M_1$.
		Take these $2(n-|M_1|)$ vertices and take an arbitrary matching between them in $G'$.
		All edges of this matching are of weight $-n$, otherwise $M_1$ is not maximum in $G$.
		Add the edges of this matching into $F$.
		Repeat the same for $M_2$, i.e.\ take an arbitrary matching in $G'$ between vertices that are not saturated by $M_2$ and add all its edges into $F$.
		The edges of weight $-n$ of the matchings for $M_1$ and $M_2$ may coincide.
		If an edge of weight $-n$ is presented in both matchings, take both its parallel copies into $F$.
		It is easy to see that $F$ is now a $2$-factor of $G'$, as it consists of edges of two perfect matchings between two parts.
		The weight of $F$ is $|M_1\cup M_2|-n(n-|M_1|)-n(n-|M_2|)=|M_1\cup M_2|-2n\cdot (n-|M_1|)$.
		
		It is left to show that $F$ is indeed a maximum-weight $2$-factor of
        $G'$.  To see this, take a maximum-weight factor $F'$ of $G'$ and
        assume that the weight of $F'$ is greater than the weight of $F$.  Note
        that $F'$ consists of $2n$ edges.  As discussed above, $F'$ forms a
        disjoint union of simple cycles on the vertices of $G'$, where each
        vertex belongs to exactly one cycle.  Note that some of these cycles may
        consist of two parallel edges.  Since $G'$ is bipartite, all of these
        cycles have even length.  Color the edges of $F'$ with two colors so
        that no two consecutive edges have the same color on a cycle.  Then the
        edges of the same color form a perfect matching in $G'$.  Denote these
        matchings by $F'_1$ and $F'_2$.  Let $M'_1$ be the set of original edges
        of $G$ which copies are present in $F'_1$.  Let $M'_2$ be the set of
        edges of $G$ obtained analogously from $F'_2$.  Copies of edges in
        $M'_1$ and $M'_2$ have weights $0$ or $1$ in $G'$.  All other edges in
        $F'_1$ and $F'_2$ are of weight $-n$.
		
		Observe that if an edge of $G$ is present in both $M'_1$ and $M'_2$,
        then one of its copies in $F'$ has weight $0$, and the other has weight
        $1$.  Thus, the total weight of $0$- and $1$-weighted edges in $F'$ is
        at most $|M'_1 \cup M'_2|$.  The number of edges of weight $(-n)$ in
        $F'$ is $2n-|M'_1|-|M'_2|$.  Thus, the total weight of $F'$ is at most
        $|M'_1\cup M'_2|-n(2n-(|M'_1|+|M'_2|))$.
		
		We assumed that the weight of $F'$ is greater than the weight of
        $F$. From this we get that
        \((|M'_1\cup M'_2|-|M_1\cup M_2|)-n(2|M_1|-(|M'_1|+|M'_2|))> 0\) holds;
        equivalently, that
        \begin{equation}
          \label{eq:1}
          (|M'_1\cup M'_2|-|M_1\cup M_2|) > n(2|M_1|-(|M'_1|+|M'_2|)).
        \end{equation}

        Recall that $M'_1$ and $M'_2$ are matchings in $G$.  Suppose $M'_1$ and
        $M'_2$ are \emph{maximum} matchings in $G$. Then the right hand side of
        \autoref{eq:1} evaluates to zero, and---by the definition of $M_1$ and
        $M_2$---the left hand side is \emph{at most} zero. Hence \autoref{eq:1}
        does not hold, a contradiction.  So at least one of $M'_1$ and $M'_2$ is
        \emph{not} a maximum matching. Thus we get that
        \begin{equation}
          \label{eq:2}
        |M'_1|+|M'_2| < 2|M_1|
        \end{equation}
        holds; equivalently, that
        \(|M'_{1}|+|M'_{2}| - |M_{1}| < |M_{1}|\) holds.
        By construction we have that the size of any matching in \(G\) is at
        most \(n\). In particular \(|M_{1}| \leq n\), and so we have that 
        \begin{equation}
          \label{eq:3}
          |M'_{1}|+|M'_{2}| - |M_{1}| < n
        \end{equation}
        holds.
        \autoref{eq:2} can be restated as
        \(2|M_{1}| - (|M'_{1}|+|M'_{2}|) > 0\).  Now, 
%		since each of the quantities on the left side of this inequality is an integer, we get that, in fact,
        \begin{equation}
          \label{eq:4}
          2|M_{1}| - (|M'_{1}|+|M'_{2}|) \geq 1
        \end{equation}
        holds.  Substituting \autoref{eq:4} in \autoref{eq:1} we get that
        \begin{equation}
          \label{eq:5}
          (|M'_1\cup M'_2|-|M_1\cup M_2|) > n
        \end{equation}
        holds. Observe now that $|M'_1|+|M'_2| \geq |M'_1\cup M'_2|$ and
        $ |M_1| \leq |M_1\cup M_2|$ hold.  Substituting these in \autoref{eq:5}
        we get that $((|M'_1|+|M'_2|)-|M_1|) > n$ holds, which contradicts
        \autoref{eq:3}.
	\end{claimproof}
	
	% The claim allows to find the maximum possible value of $|M_1\cup M_2|$ among all pairs of maximum matchings in $G$ in polynomial time.
	% Then $(G,k)$ is a yes-instance if and only if $|M_1\Delta M_2|=2|M_1|-|M_1\cup M_2|\ge k$, where $|M_1|$ equals the maximum matching size in $G$ and can be found in polynomial time.
	% The proof is complete.
    Now let \(M_{1}, M_{2}\) be two arbitrary maximum matchings of \(G\), and
    let \(\mu(G)\) denote the size of a maximum matching of \(G\). Thus
    \(|M_{1}|=|M_{1}|=\mu(G)\). By the definition of symmetric difference we
    have that
    \(|M_{1} \symdiff M_{2}| = |M_{1} \setminus (M_{1} \cap M_{2})| + |M_{2}
    \setminus (M_{1} \cap M_{2})| = |M_{1}| - |M_{1} \cap M_{2}| + |M_{2}| -
    |(M_{1} \cap M_{2})| = 2\mu(G) - 2|M_{1} \cap M_{2}|\). And since
    \(|(M_{1} \cap M_{2})| = |M_{1}| + |M_{2}| - |M_{1} \cup M_{2}| = 2\mu(G) -
    |M_{1} \cup M_{2}|\) we get that
    \(|M_{1} \symdiff M_{2}| = 2\mu(G) - 2(2\mu(G) - |M_{1} \cup M_{2}|) =
    2(|M_{1} \cup M_{2}| - \mu(G))\). Since \(\mu(G)\) is an invariant of graph
    \(G\) this means that the maximum value of \(|M_{1} \symdiff M_{2}|\) is
    attained by exactly those pairs of maximum matchings \(M_{1}, M_{2}\) which
    maximize the value \(|M_{1} \cup M_{2}|\). Further, let
    \(M_{1}^{\star}, M_{2}^{\star}\) be a pair of maximum matchings such that
    \(|M_{1}^{\star} \cup M_{2}^{\star}|\) is the maximum among all pairs of
    maximum matchings. Then we have that the maximum value of
    \(|M_{1} \symdiff M_{2}|\), over all pairs of maximum matchings, equals
    \(2(|M_{1}^{\star} \cup M_{2}^{\star}| - \mu(G))\).

    From \autoref{cla:two_factor_claim} we get that we can compute the value
    \(|M_{1}^{\star} \cup M_{2}^{\star}|\)---though not the matchings
    \(M_{1}^{\star}\) and \(M_{2}^{\star}\)---in polynomial time, by computing
    the weight of a maximum 2-factor in a derived graph. We can find the maximum
    matching size \(\mu(G)\) of \(G\) in polynomial time as well. So we can
    compute the number \(2(|M_{1}^{\star} \cup M_{2}^{\star}| - \mu(G))\) in
    polynomial time. By the arguments in the previous paragraph, checking
    whether \(2(|M_{1}^{\star} \cup M_{2}^{\star}| - \mu(G)) \geq k\) suffices
    to solve the bipartite instance \((G,k)\) of \divmatchmax.
    \end{proof}

%!TEX root=diverse-matchings.tex
\section{FPT-Algorithm for \divmatchmax}
In this section we give an \FPT-algorithm for \divmatchmax parameterized by $k$.
We first give a randomized algorithm based on the color-coding technique of Alon, Yuster and Zwick \cite{AlonYZ95} in Theorem~\ref{thm:fpt},
and then derandomize this algorithm at the cost of a slightly slower runtime 
in Corollary~\ref{cor:fpt:derandom}.
\newcommand\red{\mathrm{red}}
\newcommand\green{\mathrm{green}}
\newcommand\blue{\mathrm{blue}}
\begin{theorem}\label{thm:fpt}
	\divmatchmax parameterized by $k$ is \FPT.
	More precisely, there is a randomized algorithm that in time $4^k \cdot n^{\calO(1)}$
	finds a solution with constant probability, if it exists, and correctly concludes that there is no solution otherwise,
	where $n$ denotes the number of vertices of the input graph.
\end{theorem}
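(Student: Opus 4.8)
The plan is to reduce the problem to a purely existential question about a \emph{single} matching, and then to answer that question using color-coding together with minimum-cost maximum matching computations. Since any two matchings of equal size have symmetric difference of even cardinality, we may assume that \(k = 2t\) is even. As two maximum matchings both have size \(\mu(G)\), we have \(\card{M_1 \symdiff M_2} = 2\card{M_2 \setminus M_1}\); hence \((G,k)\) is a \yes-instance if and only if \(G\) has a matching \(N\) with \(\card{N} = t\) such that \emph{(a)} \(\mu(G - E(N)) = \mu(G)\), i.e.\ some maximum matching of \(G\) uses no edge of \(N\), and \emph{(b)} \(\mu(G - V(N)) = \mu(G) - t\), i.e.\ \(N\) extends to a maximum matching of \(G\). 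For the forward direction, given such an \(N\), take a maximum matching \(M_1\) of \(G - E(N)\) (of size \(\mu(G)\) by \emph{(a)}) and a maximum matching \(M_2 \supseteq N\) of \(G\) (which exists by \emph{(b)}); then \(N \subseteq M_2 \setminus M_1\), so \(\card{M_1 \symdiff M_2} \geq 2t = k\). For the converse, from a solution \((M_1, M_2)\) let \(N\) be any \(t\) edges of \(M_2 \setminus M_1\); then \(N\) is a matching contained in the maximum matching \(M_2\), so \emph{(b)} holds, and \(N \cap M_1 = \emptyset\), so \(M_1\) witnesses \emph{(a)}.

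It remains to find such an \(N\). Its vertex set has exactly \(2t = k\) vertices, and I would colour \(V(G)\) with a constant number of colours uniformly at random so that, with probability \(\Omega(4^{-k})\), the vertices of some target \(N\) receive a pattern of colours that both identifies \(V(N)\) and prescribes how \(N\) pairs those vertices up. Conditioned on such a good colouring, I would test whether a matching \(N\) consistent with the colouring and satisfying \emph{(a)} and \emph{(b)} exists by means of two minimum-cost maximum matching computations on auxiliary edge-weighted graphs built from \(G\) and the colouring --- using the classical fact that a minimum-cost maximum matching can be computed in polynomial time, even on edge-weighted (multi)graphs --- one computation producing a maximum matching of \(G\) that contains \(N\) (certifying \emph{(b)}), and the other producing a maximum matching of \(G\) that avoids the edges of \(N\) (certifying \emph{(a)}). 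Running \(\calO(4^{k})\) independent trials boosts the success probability to a constant, giving total running time \(4^{k} \cdot n^{\calO(1)}\).

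The main obstacle is the middle step: arranging that the colouring interacts correctly with the matching computation, so that the minimum-cost maximum matching subroutine is forced to reconstruct a \emph{valid} matching \(N\) (rather than some other matching on the same coloured vertices) while the number of colours stays small enough to keep the success probability \(\Omega(4^{-k})\); this is compounded by the fact that conditions \emph{(a)} and \emph{(b)} are global properties of \(G\), so they must be enforced through the edge weights rather than verified locally. Once the randomized algorithm is in place, the derandomized version claimed in Corollary~\ref{cor:fpt:derandom} follows by replacing the random colourings with a suitable family of universal sets, at the cost of an extra \(k^{\calO(\log k)}\) factor.
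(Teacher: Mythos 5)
Your opening reduction is correct and is a genuinely different way to set up the problem than the paper's: assuming $k=2t$ even, $(G,k)$ is a \yes-instance iff there is a matching $N$ with $t$ edges that extends to a maximum matching and is avoided by some maximum matching. However, the proof stops exactly where the algorithmic content has to begin, and you say so yourself (``the main obstacle is the middle step''). That step is a genuine gap, and as described it cannot work with the stated parameters: a vertex colouring with a \emph{constant} number of colours cannot ``prescribe how $N$ pairs those vertices up'' --- encoding a specific pairing of the $k$ vertices of $N$ needs on the order of $k$ colours, which drives the success probability to roughly $k^{-k}$ rather than $\Omega(4^{-k})$; with constantly many colours, many matchings on the coloured vertices are consistent with the pattern, and you give no mechanism that forces the two minimum-cost computations to agree on one and the same $N$. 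Worse, conditions \emph{(a)} and \emph{(b)} couple the unknown $N$ with two \emph{different} maximum matchings, so the two auxiliary computations you propose are uncoordinated: one may certify \emph{(b)} for some $N_1$ and the other certify \emph{(a)} for a different $N_2$, which proves nothing. No edge-weighting scheme is exhibited, and designing one is precisely the hard part.

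For comparison, the paper avoids guessing $N$ altogether. It first computes a fixed maximum matching $M$ and checks, via one minimum-cost maximum matching computation (cost $1$ on edges of $M$), whether $M$ itself can be one of the pair; if not, every maximum matching $M'$ satisfies $\card{M \symdiff M'} < k$, so by the triangle inequality any solution $(M_1^*,M_2^*)$ has $\card{M_1^*\setminus M_2^*}+\card{M_2^*\setminus M_1^*}\le 2k$. Only then does it $2$-colour the \emph{edges}, succeeding with probability $2^{-2k}=4^{-k}$ when $M_1^*\setminus M_2^*$ is red and $M_2^*\setminus M_1^*$ is blue; given a good colouring, two minimum-cost maximum matching computations (minimise blue cost, respectively red cost) return matchings $M_1,M_2$, and an exchange/counting argument using that all four matchings are maximum (hence of equal size) shows $\card{M_1\symdiff M_2}\ge\card{M_1^*\symdiff M_2^*}\ge k$. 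The preliminary check is what makes the $4^{-k}$ probability bound legitimate, and the cost-comparison argument is what replaces your unrealised ``force the subroutine to reconstruct $N$'' step. If you want to salvage your single-witness formulation, you would need to supply an analogous concrete scheme (colouring plus weights plus a correctness proof); as it stands, the claimed $4^k\cdot n^{\calO(1)}$ bound is not established.
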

\begin{proof}
	Let $G$ be the graph of the given instance.
	First, we compute a maximum matching $M$ in $G$ in polynomial time~\cite{edmonds1965paths,GabowTarjan1991,MuchaS04}.
%	in the following, we will refer to $M$ as the \emph{base matching}.
	%
	We check if there is a solution using $M$ as one of the two matchings.
	\begin{nestedclaimthm}\label{claim:base}
		Let $G$ be a graph and $M$ a maximum matching of $G$. 
		One can determine in polynomial time whether $G$ has a maximum matching $M'$
		such that $\card{M \symdiff M'} \ge k$, and construct such a matching if it exists.
	\end{nestedclaimthm}
	\begin{claimproof}
		The algorithm is as follows. 
		Let $c \colon E(G) \to \{0, 1\}$ be a cost function of the edges of $G$, defined as
		\begin{align*}
			c(e) \defeq \left\lbrace\begin{array}{ll}
				1, & \mbox{ if } e \in M \\
				0, &\mbox{ otherwise}
			\end{array}\right.
			~~~\forall e \in E(G)
		\end{align*}
		Let $M'$ be a minimum cost maximum matching in $G$ using the cost function $c$.
		Such a matching $M'$ can be found in polynomial time~\cite{GabowTarjan1991}.
		Due to the cost function $c$, a minimum cost maximum matching in $G$ is one 
		that minimizes the number of edges from $M$. 
		Therefore, $M'$ maximizes the symmetric difference with $M$, over all maximum matchings of $G$.
		We verify whether $\card{M \symdiff M'} \ge k$, and if so, return $M'$.
		Otherwise, we correctly conclude that there is no matching 
		satisfying the conditions of the claim.
	\end{claimproof}
	
	Due to Claim~\ref{claim:base}, we may now assume that for each maximum matching $M'$ of $G$,
	$\card{M' \symdiff M} \le k$.
	We will exploit this property to give an algorithm using color coding (see e.g.~\cite[Chapter 5]{CyganEtAl15}).
	We color the edges of $G$ uniformly at random with colors $\red$ and $\blue$.
	For ease of exposition, we also use the notation `$\red$' and `$\blue$' to denote the \emph{set}
	of edges that received color $\red$ and $\blue$, respectively.
	
	Suppose that there is a solution $(M_1, M_2)$.
	We say that a coloring as above is \emph{good for $(M_1, M_2)$}, 
	if the edges in $M_1 \setminus M_2$ and  $M_2 \setminus M_1$ 
	are colored $\red$ and $\blue$, respectively.
	We call an edge coloring \emph{good}, if it is good for some solution.
	To be able to show that trying $4^k$ colorings to achieve constant success probability suffices,
	we bound the size of these sets.
	
	By Claim~\ref{claim:base}, we know that $\card{M \symdiff M_r} \le k$ for all $r \in \{1,2\}$.
	Since $\card{M_1\symdiff M_2}$ is the Hamming distance between sets, by the triangle inequality, 
	\begin{equation*}
	\card{M_1\symdiff M_2}\leq \card{M_1\symdiff M}+\card{M\symdiff M_2}\leq 2k
	\end{equation*}
%	Let $x \in M_1 \setminus M_2$. If $x \in M$, then $x \in M \setminus M_2$, and if $x \notin M$, then $x \in M_1 \setminus M$.
%	Therefore, $M_1 \setminus M_{2} \subseteq (M \setminus M_2) \cup (M_1 \setminus M)$. 
%	Similarly, $M_2 \setminus M_1 \subseteq (M \setminus M_1) \cup (M_2 \setminus M)$.
%	%Finally, $(M_1 \cap M_2) \setminus M \subseteq (M_1 \setminus M) \cap (M_2 \setminus M)$.
%	Therefore, 
%	\begin{align*}
%		(M_1 \setminus M_2) \cup (M_2 \setminus M_1)
%		\subseteq  & (M \setminus M_2) \cup (M_1 \setminus M)\cup (M \setminus M_1) \cup (M_2 \setminus M)\\
%				= & (M \symdiff M_1) \cup (M \symdiff M_2)
%	\end{align*}
	and $\card{M_1 \setminus M_2} +\card{M_2 \setminus M_1} \le 2k$.
	This leads to the following observation.
	\begin{nestedobservationthm}\label{obs:coloring}
		Let $G$ be a graph, let $M$ be a maximum matching of $G$, 
		and suppose that for all maximum matchings $M'$ of $G$, $\card{M \symdiff M'} \le k$.
		Suppose the edges of $G $ are colored uniformly at random with colors $\red$ and $\blue$.
		Suppose there is a solution $(M_1, M_2)$. Then, with probability at least $2^{-2k}$,
		the edge coloring is good for $(M_1, M_2)$.
	\end{nestedobservationthm}
	
	Suppose that our instance is a \yes-instance, and that the edges of $G$ are colored with a good coloring.
	We show how to obtain the solution in polynomial time from the edge-colored graph.
%	This assumption can be made with a multiplicative factor of $3^{2k}$ 
%	in the runtime due to Observation~\ref{obs:coloring}.
	%
	\begin{nestedclaimthm}\label{claim:good:coloring}
		Let $G$ be a graph, $M$ a maximum matching of $G$, 
		and suppose that the edges of $G$ are colored uniformly at random with colors $\red$ and $\blue$.
		There is an algorithm that runs in polynomial time, and
		if the edge-coloring is good, finds two maximum matchings $M_1$ and $M_2$ in $G$ such that $\card{M_1 \symdiff M_2} \ge k$,
		and reports \no otherwise.
	\end{nestedclaimthm}
	\begin{claimproof}
		The idea is similar to the one given in the algorithm of Claim~\ref{claim:base}.
		To find $M_1$, we define the following  cost function $c_1 \colon E(G) \to \{0, 1\}$:
		\begin{align*}
			c_1(e) \defeq \left\lbrace\begin{array}{ll}
				1, &\mbox{ if } e \in \blue \\
				0, &\mbox{ if } e \in \red
			\end{array}\right. ~~~\forall e \in E(G).
		\end{align*}
		Then, we find a minimum-cost maximum matching $M_1$ of $G$ with the cost function $c_1$
		in polynomial time~\cite{LP09}.
%		By construction of $G_1$ and our cost function $c_1$, the matching $M_1$ has the following property:
%		among all maximum matchings in $G$ that do not have any $\blue$ edges, 
%		it has the maximum number of $\red$ edges.
		
		Next, to find $M_2$, we consider the cost function $c_2 \colon E(G) \to \{0, 1\}$, where
		\begin{align*}
			c_2(e) \defeq \left\lbrace\begin{array}{ll}
				1, &\mbox{ if } e \in \red \\
				0, &\mbox{ if } e \in \blue
			\end{array}\right. ~~~\forall e \in E(G),
		\end{align*}
		and find a minimum-cost maximum matching $M_2$ of $G$ with cost function $c_2$
		in polynomial time~\cite{LP09}.
		%
%		Similar to above, among all maximum matchings in $G$ that do not have any $\red$ edges,
%		$M_2$ has the maximum number of $\blue$ edges. 
		%
		Now, if $\card{M_1 \symdiff M_2} \ge k$, then we return $(M_1, M_2)$, and we say \no, otherwise.
		
		We now argue the correctness of the algorithm in the case that the edge-coloring of $G$ was good.
		In this case, there is a solution $(M_1^*, M_2^*)$
		such that the edges of $M_1^*\setminus M_2^*$ are $\red$ and the edges of $M_2^*\setminus M_1^*$ are $\blue$, and
		$|M_1^*\symdiff M_2^*|\geq k$. We claim that $|M_1\symdiff M_2|\geq |M_1^*\symdiff M_2^*|$.
		
		To obtain a contradiction, assume that  $|M_1\symdiff M_2|<|M_1^*\symdiff M_2^*|$.		 
		
		Since $M_1$, $M_2$, $M_1^*$, and $M_2^*$ are maximum matchings of $G$, they have the same size.
		Therefore, we have that 
		\begin{equation*}
		|M_1\symdiff M_2|+2|M_1\cap M_2|=|M_1|+|M_2|=|M_1^*|+|M_2^*|=	|M_1^*\symdiff M_2^*|+2|M_1\cap M_2|.	
		\end{equation*}
		Since $|M_1\symdiff M_2|<|M_1^*\symdiff M_2^*|$, we obtain that 
		\begin{equation}\label{eq:one}
		|M_1\cap M_2|>|M_1^*\cap M_2^*|. 
		\end{equation}
		Because $M_1^*\setminus M_2^*\subseteq\red$,
		$c_1(M_1^*)=|M_1^*\cap \blue|=|(M_1^*\cap M_2^*)\cap\blue|$ by the definition of the cost function $c_1$. Symmetrically,  $c_2(M_2^*)=|(M_1^*\cap M_2^*)\cap\red|$. Hence,
		\begin{equation}\label{eq:two}		
		c_1(M_1^*)+c_2(M_2^*)=|(M_1^*\cap M_2^*)\cap\blue|+	|(M_1^*\cap M_2^*)\cap\red|=|M_1^*\cap M_2^*|.	
		\end{equation}		
		Notice that $c_1(M_1)=|M_1\cap\blue|\geq |(M_1\cap M_2)\cap\blue|$ and 
		$c_2(M_2)=|M_2\cap\red|\geq |(M_1\cap M_2)\cap\red|$. Therefore, 
		\begin{equation}\label{eq:three}
		c_1(M_1)+c(M_2)\geq |(M_1\cap M_2)\cap\blue|+|(M_1\cap M_2)\cap\red|=|M_1\cap M_2|.		
		\end{equation}
		Combining (\ref{eq:one})--(\ref{eq:three}), we obtain that 
		\begin{equation}\label{eq:four}
		c_1(M_1)+c_2(M_2)\geq |M_1\cap M_2|>|M_1^*\cap M_2^*|=c_1(M_1^*)+c_2(M_2^*).
		\end{equation}
		However, $c_1(M_1)\leq c_1(M_1^*)$ and $c_2(M_2)\leq c_2(M_2^*)$ by the definition of these matchings and 
		\begin{equation*}
		c_1(M_1)+c_2(M_2)\leq c_1(M_1^*)+c_2(M_2^*)
		\end{equation*}
		contradicting (\ref{eq:four}).
		We conclude that $|M_1\symdiff M_2|\geq |M_1^*\symdiff M_2^*|$.		
%		such that $M_1^*$ does not use any $\blue$ edges and $M_2^*$ does not use any $\red$ edges, and
%	 	that the edges $M_1^* \setminus M_2^*$ are colored $\red$,
%		and the edges $M_2^* \setminus M_1^*$ are colored $\blue$.
%		Since $M_1$ has the maximum number of $\red$ edges among all maximum matchings without $\blue$ edges, 
%		and $M_2$ has the maximum number of $\blue$ edges among all maximum matchings without $\red$ edges, 
%		we have that 
%		$\card{M_1 \cap \red} \ge \card{M_1^* \cap \red}$ 
%		and $\card{M_2 \cap \blue} \ge \card{M_2^* \cap \blue}$.
%		Since the edges in the symmetric difference of either pair of matchings is in $\red \cup \blue$,
%		this allows us to conclude that $\card{M_1 \symdiff M_2} \ge \card{M_1^* \symdiff M_2^*} \ge k$
%		(where the latter inequality is given by the fact that $(M_1^*, M_2^*)$ is a solution). 
	\end{claimproof}
	\begin{algorithm}[t]
		\SetKwInOut{Input}{Input}
		\SetKwInOut{Output}{Output}
		\SetKwFor{RepeatTimes}{repeat}{times}{}
		\Input{Graph $G$, integer $k$}
		\Output{If exists, with constant probability, a pair $M_1$, $M_2$ of maximum matchings of $G$ such that $\card{M_1 \symdiff M_2} \ge k$,
			\no otherwise.}
		Compute a maximum matching $M$ of $G$\;\label{alg:fpt:0}
		\If{there is a maximum matching $M'$ in $G$ such that $\card{M \symdiff M'} \ge k$}{\label{alg:fpt:base}
			\Return $(M, M')$\;
		}
		\Else {
			\RepeatTimes{$2^{2k}$}{
				Color the edges of $G$ uniformly at random with colors $\red$ and $\blue$;\\
				run the algorithm of Claim~\ref{claim:good:coloring}\;\label{alg:fpt:good}
				\lIf{the algorithm returned $(M_1, M_2)$}{\Return $(M_1, M_2)$}
			}
			\Return \no\;
		}
		\caption{The algorithm of Theorem~\ref{thm:fpt}.}
		\label{alg:fpt}
	\end{algorithm}
	
	The outline of the procedure is given in Algorithm~\ref{alg:fpt}.
	It is well-known that a maximum matching of a graph can be found in polynomial time,  
	therefore line~\ref{alg:fpt:0} takes polynomial time.
	By Claims~\ref{claim:base} and~\ref{claim:good:coloring}, 
	lines~\ref{alg:fpt:base} and~\ref{alg:fpt:good}, respectively, take polynomial time.
	Moreover, by Observation~\ref{obs:coloring}, if there is a solution, 
	then with probability at least $2^{-2k}$ an edge-coloring as constructed in line~\ref{alg:fpt:good}
	is good, in which case the algorithm finds the solution by Claim~\ref{claim:good:coloring}.
	It is clear that repeating this step $2^{2k}$ times yields a constant success probability.
\end{proof}

The algorithm of Theorem~\ref{thm:fpt} can be derandomized by standard tools (see, e.g.,~\cite[Chapter 5]{CyganEtAl15}).
%We now show how to derandomize the algorithm of Theorem~\ref{thm:fpt}.
To do so, we use the following notion of $(\Omega, k)$-universal sets,
which will replace the random coloring step in the above algorithm 
by deterministic choices of colorings. 
\begin{definition}[$(\Omega, k)$-universal set]
	Let $\Omega$ be a set and $k$ be a positive integer with $k \le \card{\Omega}$.
	An \emph{$(\Omega, k)$-universal set} is a family $\calU$ of subsets of $\Omega$ such that
	for any size-$k$ set $S \subseteq \Omega$, the family $\calU_S \defeq \{A \cap S \colon A \in \calU\}$
	contains all subsets of $S$.
\end{definition}

We will use the following construction of a small universal set due to Naor et al.~\cite{NSS94}.
\begin{theorem}[\cite{NSS94}, see also Theorem~5.20 in~\cite{CyganEtAl15}]\label{thm:universal:set}
	For any set $\Omega$ and integer $k \le \card{\Omega}$, one can construct an
	$(\Omega, k)$-universal set of size $2^kk^{\calO(\log k)}\log(\card{\Omega})$
	in time $2^kk^{\calO(\log k)}\card{\Omega} \log(\card{\Omega})$.
\end{theorem}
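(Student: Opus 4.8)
This is a classical result of Naor, Schulman and Srinivasan~\cite{NSS94} (see also~\cite[Chapter~5]{CyganEtAl15}), so in the paper it is enough to cite it; but were one to reprove it, the plan is the standard two-stage strategy. \emph{Stage~1} reduces the arbitrary ground set $\Omega$ to one of size polynomial in $k$. Concretely, I would construct a family $\mathcal{H}$ of functions $h\colon\Omega\to[k^2]$ of size $k^{\calO(1)}\log\card{\Omega}$, in time $k^{\calO(1)}\card{\Omega}\log\card{\Omega}$, with the property that for every $k$-element $S\subseteq\Omega$ some $h\in\mathcal{H}$ is injective on $S$ (a ``coarse splitter''); such families are routine, e.g.\ via hashing modulo a suitably chosen prime together with a small number of multipliers. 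Given $\mathcal{H}$ together with a $([k^2],k)$-universal set $\mathcal{V}$, the family $\{\,h^{-1}(A):h\in\mathcal{H},\ A\in\mathcal{V}\,\}$ is an $(\Omega,k)$-universal set: for a $k$-set $S$ and a target $T\subseteq S$, pick $h\in\mathcal{H}$ injective on $S$, pick $A\in\mathcal{V}$ with $A\cap h(S)=h(T)$ (possible since $h(S)$ has size $k$ and $h(T)\subseteq h(S)$), and observe $h^{-1}(A)\cap S=T$. This accounts for the $\log\card{\Omega}$ factor in the size bound and the $\card{\Omega}\log\card{\Omega}$ factor in the running time, leaving only the construction of $\mathcal{V}$.

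\emph{Stage~2} constructs the $([k^2],k)$-universal set $\mathcal{V}$ of size $2^k k^{\calO(\log k)}$, and this is the technical heart. The point of having first shrunk the ground set to size polynomial in $k$ is that every auxiliary object used from now on (further splitters, almost-$k$-wise-independent sample spaces, and so on) has size and construction time depending on $k$ alone, with logarithmic-in-the-ground-set overheads degrading only to $\calO(\log k)$. The $2^k$ leading term ultimately traces back to the trivial fact that an $(m,1)$-universal set has size $2$ (namely $\{\emptyset,[m]\}$): a $k$-set gets decomposed, through $\calO(\log k)$ nested balanced splittings, into $k$ singletons, each handled by such a two-element family, so the binary choices multiply to $2^k$ while the $\calO(\log k)$ layers of the decomposition contribute the $k^{\calO(\log k)}$ overhead. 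Making this precise is exactly the content of~\cite{NSS94}: the decomposition must be arranged so that the per-layer overheads stay sub-exponential \emph{and} are not raised to exponential powers when the layers are composed.

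The main obstacle is Stage~2, and within it, pinning down the base of the exponential to exactly $2$ with only a $k^{\calO(\log k)}$ multiplicative loss. Two tempting shortcuts both fail: running the analogue of the Stage~1 reduction with a perfect hash family $F\colon[k^2]\to[k]$ and outputting $\{\,h^{-1}(B):h\in F,\ B\subseteq[k]\,\}$ is correct but costs $\card{F}\cdot 2^k$, and perfect hash families into $[k]$ necessarily have size $e^{\Theta(k)}$, so the base degrades to roughly $(2e)^k$; while a naive balanced-halving recursion that pays a polynomial-size splitting family at each of its $\Theta(k)$ recursion nodes blows the bound up to $2^{\Theta(k\log k)}$, because those polynomial factors get raised to the powers $2^i$ encountered down the recursion tree. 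Circumventing both is precisely the NSS construction; checking its correctness, its explicitness (a deterministic algorithm meeting the stated time bound, not a mere existence proof), and that its size recurrence really does solve to $2^k k^{\calO(\log k)}$ is where the work lies — and since all of this is carried out in~\cite{NSS94} and~\cite[Chapter~5]{CyganEtAl15}, I would simply cite those sources.
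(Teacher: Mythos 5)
The paper gives no proof of this statement---it is imported verbatim from Naor, Schulman and Srinivasan (see also Theorem~5.20 of Cygan et al.)---and you likewise conclude that citing those sources suffices, so your treatment matches the paper's exactly. Your accompanying sketch (hash $\Omega$ down to a ground set of size $k^{\calO(1)}$, compose with a small universal set there, and defer the $2^k k^{\calO(\log k)}$ construction to the NSS splitter machinery) is a faithful outline of the cited argument, including the correct identification of why the naive perfect-hashing and balanced-halving shortcuts fail.
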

This immediately gives the following corollary.
\begin{corollary}\label{cor:fpt:derandom}
	There is a deterministic $4^{k}k^{\calO(\log k)} \cdot n^{\calO(1)}$ time algorithm that solves \divmatchmax,
	where $n$ denotes the number of vertices in the input graph.
\end{corollary}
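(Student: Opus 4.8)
The plan is to derandomize Algorithm~\ref{alg:fpt} by replacing the random edge-coloring step with an exhaustive enumeration over colorings drawn from a small universal set. First I would set $\Omega \defeq E(G)$ and invoke Theorem~\ref{thm:universal:set} with this ground set and with parameter $2k$ in place of $k$; the relevant ``special'' set is the symmetric difference $M_1^\star \symdiff M_2^\star$ of a hypothetical solution, which by the argument preceding Observation~\ref{obs:coloring} (the triangle-inequality bound $\card{M_1 \symdiff M_2} \le 2k$) has size at most $2k$. If $\card{E(G)} \le 2k$, we can solve the instance by brute force, so we may assume $2k \le \card{E(G)}$ and the hypothesis of Theorem~\ref{thm:universal:set} is met. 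This produces an $(\Omega, 2k)$-universal set $\calU$ of size $2^{2k}(2k)^{\calO(\log k)}\log\card{E(G)} = 4^k k^{\calO(\log k)} \cdot n^{\calO(1)}$ in time of the same order.

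Next I would describe the modified algorithm: as in Algorithm~\ref{alg:fpt}, first compute a maximum matching $M$ and run the algorithm of Claim~\ref{claim:base}; if it succeeds we are done. Otherwise, for each set $A \in \calU$ we produce the coloring that assigns $\red$ to the edges in $A$ and $\blue$ to the edges not in $A$, and we run the polynomial-time algorithm of Claim~\ref{claim:good:coloring} on this colored graph; if it ever returns a pair, we output it, and otherwise we report \no. Correctness follows because, on a \yes-instance with solution $(M_1^\star, M_2^\star)$, the set $S \defeq (M_1^\star \setminus M_2^\star) \cup (M_2^\star \setminus M_1^\star)$ has size at most $2k$, so by the universal-set property there is some $A \in \calU$ with $A \cap S = M_1^\star \setminus M_2^\star$; the corresponding coloring colors every edge of $M_1^\star \setminus M_2^\star$ with $\red$ and every edge of $M_2^\star \setminus M_1^\star$ with $\blue$, i.e.\ it is good for $(M_1^\star, M_2^\star)$, and Claim~\ref{claim:good:coloring} then returns a valid pair. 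The running time is the size of $\calU$ times a polynomial, which is $4^k k^{\calO(\log k)} \cdot n^{\calO(1)}$ as claimed.

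The one point requiring a little care—and the closest thing to an obstacle—is getting the universality parameter right: a good coloring constrains \emph{two} disjoint edge sets ($M_1^\star \setminus M_2^\star$ must be all-$\red$ and $M_2^\star \setminus M_1^\star$ must be all-$\blue$), not just one, so it is essential to apply the universal set to the \emph{union} $S$ of these two sets, whose size is bounded by $2k$ rather than $k$, and to observe that fixing $A \cap S$ to equal one of the two sets simultaneously fixes the other (it is $S \setminus (A \cap S)$). This is exactly why the parameter passed to Theorem~\ref{thm:universal:set} is $2k$, which in turn is the source of the $4^k = 2^{2k}$ factor. Everything else is a routine substitution into the structure of Theorem~\ref{thm:fpt}.

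\begin{proof}
	We modify Algorithm~\ref{alg:fpt} by derandomizing the random coloring step using the construction of Theorem~\ref{thm:universal:set}.
	Let $G$ be the input graph. If $\card{E(G)} \le 2k$, we solve the instance by brute force over all pairs of matchings in time $2^{\calO(k)} \cdot n^{\calO(1)}$, so assume $2k \le \card{E(G)}$.
	As in the proof of Theorem~\ref{thm:fpt}, first compute a maximum matching $M$ of $G$ and run the algorithm of Claim~\ref{claim:base}; if it returns a maximum matching $M'$ with $\card{M \symdiff M'} \ge k$, return $(M, M')$.
	Otherwise, for all maximum matchings $M'$ of $G$ we have $\card{M \symdiff M'} \le k$, and in particular, by the triangle inequality, any solution $(M_1, M_2)$ satisfies $\card{M_1 \symdiff M_2} \le 2k$.

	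Apply Theorem~\ref{thm:universal:set} with $\Omega \defeq E(G)$ and with parameter $2k$ to obtain an $(\Omega, 2k)$-universal set $\calU$ of size $2^{2k}(2k)^{\calO(\log k)}\log\card{E(G)} = 4^k k^{\calO(\log k)} \cdot n^{\calO(1)}$ in time $4^k k^{\calO(\log k)} \cdot n^{\calO(1)}$.
	For each $A \in \calU$, color the edges of $A$ with $\red$ and the edges of $E(G) \setminus A$ with $\blue$, and run the polynomial-time algorithm of Claim~\ref{claim:good:coloring} on the resulting edge-colored graph.
	If for some $A$ this algorithm returns a pair $(M_1, M_2)$, return it; if it fails for all $A \in \calU$, return \no.

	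We argue correctness. If the instance is a \no-instance, then the algorithm of Claim~\ref{claim:good:coloring} never returns a pair (it always returns two maximum matchings with small symmetric difference, or \no), so the algorithm correctly returns \no. If the instance is a \yes-instance and Claim~\ref{claim:base} did not already produce a solution, let $(M_1^\star, M_2^\star)$ be a solution and put $S \defeq (M_1^\star \setminus M_2^\star) \cup (M_2^\star \setminus M_1^\star) = M_1^\star \symdiff M_2^\star$, so $\card{S} \le 2k$.
	By the universality of $\calU$, there exists $A \in \calU$ with $A \cap S = M_1^\star \setminus M_2^\star$; then $S \setminus (A \cap S) = M_2^\star \setminus M_1^\star$, so the coloring induced by $A$ colors every edge of $M_1^\star \setminus M_2^\star$ with $\red$ and every edge of $M_2^\star \setminus M_1^\star$ with $\blue$, i.e.\ it is good for $(M_1^\star, M_2^\star)$.
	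Hence, by Claim~\ref{claim:good:coloring}, the algorithm returns a valid pair when processing this $A$.

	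The running time is dominated by iterating over $\calU$ and running a polynomial-time subroutine for each element, which is $4^k k^{\calO(\log k)} \cdot n^{\calO(1)}$.
\end{proof}
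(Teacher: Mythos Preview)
Your derandomization is exactly the one the paper intends (the paper itself offers no proof beyond pointing to Theorem~\ref{thm:universal:set}), and your identification of $2k$ as the correct universality parameter, together with the argument that fixing $A\cap S=M_1^\star\setminus M_2^\star$ automatically forces the complementary half of $S$ to be $\blue$, is precisely the point.

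Two small technicalities are worth tightening. First, the paper's definition of an $(\Omega,2k)$-universal set is stated only for sets $S$ of size \emph{exactly} $2k$, whereas your $S=M_1^\star\symdiff M_2^\star$ may have $\card{S}<2k$; this is harmless---pad $S$ to size $2k$ with arbitrary edges (possible since you assumed $\card{E(G)}\ge 2k$) and restrict back---but should be said. Second, in the base case $\card{E(G)}\le 2k$, ``brute force over all pairs of matchings'' naively costs up to $4^{\card{E(G)}}\le 16^k$, which exceeds the claimed bound; instead, simply enumerate all $2^{\card{E(G)}}\le 4^k$ red/blue colorings and run the algorithm of Claim~\ref{claim:good:coloring} on each, which keeps you within $4^k\cdot n^{\calO(1)}$.
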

\section{Polynomial kernel for \divmatch}
We now show that the \divmatch problem, asking for a pair of not necessarily maximum matchings
has a kernel on $\calO(k^2)$ vertices. 
Note that the \NP-completeness of this problem is captured 
in Observation~\ref{obs:hardness} as well.
Moreover, we would like to remark that this variant of the problem is only interesting in the case
when the input graph has no matching of size $k$ or more:
otherwise, a maximum matching (which can be found in polynomial time)
forms a trivial solution together with an empty matching.
\begin{theorem}\label{thm:kernel}
	\divmatch parameterized by $k$ has a kernel on $\calO(k^2)$ vertices.
\end{theorem}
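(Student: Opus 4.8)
The plan is to follow the outline already sketched in the introduction: first find a maximal matching $M$ in $G$ in polynomial time, and branch on its size. If $\card{M} \ge k$, then I claim we can immediately output a trivial \yes-instance of bounded size. Indeed, any matching of size $k$ can be partitioned into two matchings $M_1, M_2$ with $M_1 \cap M_2 = \emptyset$ and $\card{M_1} + \card{M_2} = k$ (for instance by taking $M_1$ to be the first $\lceil k/2\rceil$ edges and $M_2$ the rest), so $\card{M_1 \symdiff M_2} = k$; hence the original instance is a \yes-instance and the kernel can just return a constant-size trivial \yes-instance (e.g.\ a matching on $2k$ vertices with $k=0$, or $k$ disjoint edges with parameter $k$). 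So from now on assume $\card{M} < k$, and let $S = V(M)$ be the set of endpoints of $M$; since $M$ is maximal, $V(G) \setminus S$ is an independent set, so $S$ is a vertex cover of size at most $2(k-1) < 2k$.

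The next step is to reduce the number of vertices outside the vertex cover $S$. Every edge of $G$ has at least one endpoint in $S$, and in any solution $(M_1, M_2)$ the set $M_1 \cup M_2$ has size at most $2(k-1)$ (each matching in the symmetric difference contributes edges, and I can bound $\card{M_1 \cup M_2}$ since otherwise the remark before the theorem already gives a solution — more carefully, if $\card{M_1 \cup M_2} \ge k$ then since it is a $2$-matching... actually the cleanest bound: we only need to worry about instances with no matching of size $k$, so $\card{M_i} < k$ and $\card{M_1 \cup M_2} < 2k$). Thus any solution uses at most $2k$ vertices outside $S$. The reduction rule: for each vertex $v \in S$, if $v$ has more than $2k$ neighbors in $I \defeq V(G) \setminus S$ that are otherwise ``equivalent,'' keep only $2k+1$ of them. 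More precisely, I would order the independent-set vertices by their neighborhood in $S$; for each subset (or more cheaply, for each single vertex $v\in S$), if $\card{N_G(v) \cap I}$ is large, we can afford to delete all but, say, $2k$ private-type neighbors. The key combinatorial claim is an exchange argument: given a solution, if it uses an edge $vu$ with $u \in I$ and $u$ was deleted, then there is an undeleted $u' \in N_G(v)\cap I$ not used by either matching (since fewer than $2k$ vertices of $I$ are used in total), and we can re-route the edge $vu$ to $vu'$ in whichever matchings used it, preserving the matching property (as $u'$ has degree contributions only from $S$) and the symmetric difference.

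The main obstacle I expect is getting the bookkeeping of the reduction rule right so that it genuinely leaves $\calO(k^2)$ vertices: we have $\card{S} \le 2k$, and for each vertex of $S$ we keep $\calO(k)$ of its neighbors in $I$, giving $\calO(k^2)$ vertices in $I$ that could possibly be touched; every remaining vertex of $I$ is isolated after the rule (or we simply delete isolated vertices by a trivial rule, since they never help), so the total is $\calO(k) + \calO(k^2) = \calO(k^2)$. The delicate point is that when a vertex $v\in S$ has many neighbors in $I$, naively keeping $2k+1$ per vertex of $S$ might not interact well across different $v$'s sharing neighbors — but since $I$ is independent, each $u \in I$ contributes to the count of each of its (at most $\card{S}$) neighbors in $S$ independently, and the exchange argument only needs one free substitute per edge used, so keeping $2k$ candidates per vertex of $S$ suffices regardless of overlaps. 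I would also need to double-check the equivalence of the reduced instance in both directions: a solution of the reduced graph is trivially a solution of the original since the reduced graph is a subgraph; and a solution of the original transfers via the re-routing argument, applied edge by edge to all solution edges incident to deleted vertices. Finally I would verify the parameter $k$ is unchanged and the whole procedure runs in polynomial time, completing the kernelization.

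\begin{proof}[Proof sketch of the plan]
\textbf{Step 1 (Large maximal matching).} Compute a maximal matching $M$. If $\card{M}\ge k$, split $M$ into two disjoint matchings of total size $k$; their symmetric difference is $k$, so $(G,k)$ is a \yes-instance and we return a fixed constant-size \yes-instance.

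\textbf{Step 2 (Vertex cover).} Otherwise $S\defeq V(M)$ is a vertex cover with $\card{S}<2k$, and $I\defeq V(G)\setminus S$ is independent.

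\textbf{Step 3 (Reduction rule).} Delete all isolated vertices of $I$. Then, for each $v\in S$, if $\card{N_G(v)\cap I}>2k$, delete arbitrary vertices of $N_G(v)\cap I$ down to exactly $2k$ (deleting a vertex of $I$ only affects the counts of its neighbors in $S$; repeat until the rule no longer applies).

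\textbf{Step 4 (Safeness).} One direction is trivial. For the other, take a solution $(M_1,M_2)$ of $(G,k)$; we may assume $\card{M_1\cup M_2}<2k$ (else $G$ has a matching of size $k$, handled by Step 1). For each solution edge $vu$ with $u\in I$ deleted, some $u'\in N_G(v)\cap I$ in the reduced graph is untouched by $M_1\cup M_2$ (fewer than $2k$ vertices of $I$ are used, but $2k$ candidates remain); replace $vu$ by $vu'$ in the matchings using it. This keeps both $M_1,M_2$ matchings and preserves $\card{M_1\symdiff M_2}$.

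\textbf{Step 5 (Size and time).} After Step 3, every $u\in I$ is a neighbor of some $v\in S$ with $\card{N_G(v)\cap I}\le 2k$, so $\card{I}\le 2k\cdot\card{S}=\calO(k^2)$, giving $\calO(k^2)$ vertices total. All steps run in polynomial time.
\end{proof}
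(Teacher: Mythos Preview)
Your overall strategy coincides with the paper's: compute a maximal matching $M$, use $V(M)$ as a vertex cover of size $<2k$, and then keep only $\calO(k)$ independent-set neighbours per cover vertex. However, there is a genuine gap in Step~3/Step~4 as you have written them.

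Your rule \emph{deletes} surplus neighbours of each $v\in S$ one vertex at a time. The claim ``$2k$ candidates remain'' in the safeness step is not justified: the vertex $u$ you deleted was a surplus neighbour of some $v^\star$, but the solution edge you need to re-route is $vu$ for a possibly different $v$, and $v$ may have had very few neighbours in $I$ to begin with. Concretely, take $k=3$, $M=\{v_1v_2\}$, let $v_1$ have neighbours $u_1,\dots,u_8$ in $I$ and $v_2$ have only $u_7,u_8$. If when trimming $v_1$ you delete $u_7,u_8$, then $v_2$ has no $I$-neighbours left; the reduced graph has only edges $v_1v_2,v_1u_1,\dots,v_1u_6$, where every matching has size $\le 1$ and no pair reaches symmetric difference $3$, yet the original instance is a \yes-instance via $M_1=\{v_1u_1,v_2u_7\}$, $M_2=\{v_1u_2\}$. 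The paper avoids this by \emph{marking} (adding to $X$) up to $2k$ $I$-neighbours per $v$ and keeping the union; then whenever a solution uses $vu$ with $u\notin X$, the very fact that $u$ was not marked for $v$ forces $\card{X_v}=2k$, giving the needed reservoir.

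A second, smaller gap: in Step~4 you assume $\card{M_1\cup M_2}<2k$ because ``else $G$ has a matching of size $k$, handled by Step~1''. But Step~1 only tests a \emph{maximal} matching; $\card{M}<k$ only gives max-matching $<2k$, not $<k$. Either compute a maximum matching in Step~1, or follow the paper and argue inside the exchange step that if all $2k$ marked neighbours of $v$ are hit by $M_1\cup M_2$, then those $2k$ edges already lie in $G[X]$ (both endpoints are in $X$), so one of $M_1,M_2$ restricted to $G[X]$ has $\ge k$ edges and can be split. With these two fixes your argument becomes essentially identical to the paper's.
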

\begin{proof}
	Let $(G, k)$ be an instance of \divmatch. 
	We provide a procedure that either correctly concludes that $(G, k)$ is a \yes-instance,
	or marks a set of $\calO(k^2)$ vertices $X \subseteq V(G)$ 
	such that $(G[X], k)$ is equivalent to $(G, k)$.
	
	First, let $M$ be a maximal matching of $G$. 
	If $\card{M} \ge k$, then for any $2$-partition $(M_1, M_2)$ of $M$, we have that 
	$\card{M_1 \symdiff M_2} = \card{M} \ge k$, and therefore $(G, k)$ is a \yes-instance.
	
	Suppose that $\card{M} < k$ and therefore, $\card{V(M)} < 2k$.
	Since $M$ is maximal, $V(M)$ is a vertex cover of $G$, and therefore, $E(G - V(M)) = \emptyset$.
	This motivates the following procedure that produces a set of marked vertices $X \subseteq V(G)$,
	to which we can restrict the instance without changing the answer.
	\begin{enumerate}[label={\arabic*.},ref={\arabic*}]
		\item Initialize $X \defeq V(M)$.
		\item\label{matchings:kernel:marking:2} For each $v \in V(M)$, 
			add a maximal subset of $N_G(v) \setminus V(M)$ of size at most $2k$ to $X$.
%			We denote this set by $X_v$ in the following.
	\end{enumerate}
	
	Let $X$ denote the set constructed according to the two previous steps. 
	We show that $(G[X], k)$ is equivalent to $(G, k)$.
	\begin{nestedclaimthm}\label{claim:kernel}
		Let $G$, $k$, $M$, and $X$ be as above. 
		Then, $(G, k)$ is a \yes-instance of \divmatch if and only if 
		$(G[X], k)$ is a \yes-instance of \divmatch.
	\end{nestedclaimthm}
	\begin{claimproof}
		Since $G[X]$ is a subgraph of $G$, it is clear that if $(G[X], k)$ is a \yes-instance, then so is $(G, k)$.
		
		Now suppose that $(G, k)$ is a \yes-instance and let $(M_1, M_2)$ with $\card{M_1 \symdiff M_2} \ge k$ be a solution.
		If $M_1 \cup M_2 \subseteq E(G[X])$, then $(M_1, M_2)$ is also a solution to $(G[X], k)$,
		so suppose that for some $r \in \{1, 2\}$, there is an edge $uv \in M_r$ such that $v \in V(G) \setminus X$.
		Since $V(M) \subseteq X$ and $V(M)$ is a vertex cover of $G$, we may assume that $u \in V(M)$.
		Since $v$ is a neighbor of $u$ in $V(G) \setminus X$, 
		the above marking algorithm added a set of $2k$ neighbors of $u$ in $V(G) \setminus V(M)$ to $X$,
		denote that set by $X_u$. For an illustration see \cref{fig:kernel}.
		\begin{figure}
			\centering
			\includegraphics[height=.15\textheight]{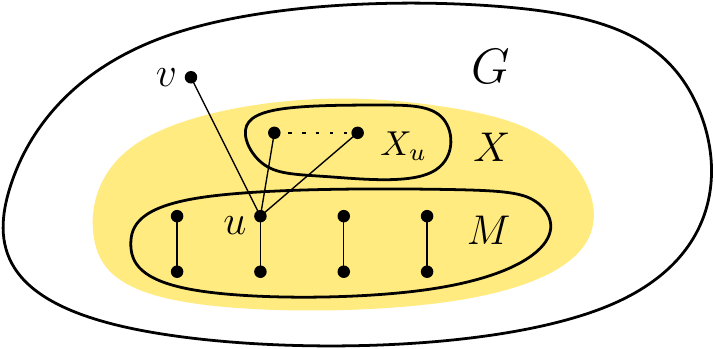}
			\caption{Illustration of the situation in the proof of \cref{claim:kernel}.
				The existence of $v$ implies that $\card{X_u} \ge 2k$,
				and since $V(M)$ is a vertex cover of $G$, the vertices in $X_u$ are pairwise non-adjacent.}
			\label{fig:kernel}
		\end{figure}
		
		Now, since $X_u \subseteq V(G) \setminus V(M)$, and since $V(M)$ is a vertex cover of $G$,
		we have that $E(G[X_u]) = \emptyset$.
		This means in particular that each edge in $M_1 \cup M_2$ has at most one endpoint in $X_u$.
		Therefore, if all vertices in $X_u$ are the endpoint of some edge in either $M_1$ or in $M_2$,
		then $\card{M_1 \cup M_2} \ge 2k$, which implies that at least one of $M_1$ and $M_2$ contains at least $k$ edges.
		Suppose w.l.o.g.\ that $\card{M_1} \ge k$. As above, any $2$-partition $(M_1', M_2')$ of $M_1$ is such that
		$\card{M_1' \symdiff M_2'} = \card{M_1} \ge k$, therefore $(M_1', M_2')$ is a solution to $(G[X], k)$.
		Otherwise, there is a vertex $x \in X_u$ that is \emph{not} the endpoint of any edge in $M_1 \cup M_2$.
		We obtain $M_r^\star$ by removing $uv$ and adding $ux$.
		Then, $(M_r^\star, M_{3-r})$ is still a solution to $(G, k)$, and it uses one more edge in $G[X]$.
		Repeatedly applying this argument shows that $(G[X], k)$ is a \yes-instance.
		%
		\iffalse
		If all vertices in $X_v$ are the endpoint of some edge either in $M_1$ or in $M_2$,
		then at least one of $M_1$ and $M_2$ is of size at least $k$, since $\card{X_v} = 2k$.
		Suppose wlog that $\card{M_1} \ge k$. As above, any $2$-partition $(M_1', M_2')$ of $M_1$ is such that
		$\card{M_1' \symdiff M_2'} = \card{M_1} \ge k$, therefore $(M_1', M_2')$ is a solution to $(G[X], k)$.
		Otherwise, there is a vertex $x \in X_v$ that is not the endpoint of any edge in $M_1 \cup M_2$.
		We obtain $M_r'$ by removing $uv$ and adding $ux$.
		Then, $(M_r', M_{3-r})$ is still a solution to $(G, k)$, and it uses one more edge in $G[X]$.
		Repeatedly applying this argument shows that $(G[X], k)$ is a \yes-instance.
		\fi
	\end{claimproof}
	
	The previous claim asserts the correctness of the procedure. Since $\card{V(M)} < 2k$, 
	and for each vertex in $V(M)$, we added at most $2k$ more vertices to $X$, we have that $\card{X} = \calO(k^2)$.
	A maximal matching can be found greedily, and it is clear that the marking procedure runs in polynomial time.
	This yields the result.
\end{proof}

%!TEX root=diverse-matchings.tex
\section{Conclusion}
In this work, we initiated the study of algorithmic problems asking for 
diverse pairs of (maximum/perfect) matchings,
where diverse means that their symmetric difference has to be at least some value $k$.
These problems are \NP-complete on $3$-regular graphs,
and we showed that on bipartite graphs, they become polynomial-time solvable;
while parameterized by $k$, they are \FPT,
and the problem asking for two diverse (not necessarily maximum) matchings
admits a polynomial kernel.

The notion of diverse matchings opens up many natural further research directions.
In this work, we considered the complexity of finding \emph{pairs} of diverse matchings.
What happens when we ask for a larger number of matchings?
In~\cite{BasteEtAl2019,BasteEtAl2019b}, the measure of diversity of a set of solutions is 
the \emph{sum} over all pairs of their symmetric difference.
In this setting, we can obtain an \FPT-algorithm parameterized by the number of requested matchings plus the `diversity target'
using the same approach as in our \FPT-algorithm for \divmatchmax.
However, if we ask for a set of matchings $\calM$ such that \emph{for each pair} $M_1, M_2 \in \calM$,
$\card{M_1 \symdiff M_2} \ge k$, then the situation is much less clear, even asking for three solutions.
Call the corresponding problem \textsc{Diverse Triples of Maximum Matchings}. 
Is it \FPT parameterized by $k$?

While the symmetric difference is a natural measure of diversity of two matchings,
one might consider other measures as well. 
The diversity measure at hand may affect the complexity of the problem,
so it would be interesting to see if there is an (easily computable) diversity measure
under which \divmatchgen becomes $\mathsf{W}$[1]-hard.

\subparagraph*{Acknowledgements.} 
We thank G\"{u}nter Rote for pointing out to us that, given a maximum matching $M$, we can find a maximum matching $M'$ such that $|M\symdiff M'|$ is maximum in polynomial time by the reduction to the \textsc{Minimum Cost Maximum Matching} problem.

\bibliographystyle{plain}
\bibliography{99-ref}

\end{document}